\documentclass[a4paper,twocolumn,11pt, accepted=2023-07-16]{quantumarticle}
\pdfoutput=1
\usepackage[utf8]{inputenc}
\usepackage[english]{babel}
\usepackage[T1]{fontenc}
\usepackage{amsmath}
\usepackage{amsthm}
\usepackage{dsfont}
\usepackage{hyperref}
\usepackage{physics}
\usepackage{graphicx}
\usepackage{tikz}
\usepackage{lipsum}
\usepackage{dsfont}
\usepackage{float}
\usepackage{multicol}

\usepackage{thm-restate}
\declaretheorem[name=Theorem]{thm}
\declaretheorem[name=Lemma]{lemm}

\newcommand{\win}{\text{win}}

\begin{document}

\title{A coherence-witnessing game and applications to semi-device-independent quantum key distribution}

\author{M\'{a}rio Silva}
\affiliation{Université de Lorraine, CNRS, Inria, LORIA, F-54000 Nancy, France}
\email{mmachado@loria.fr}
\author{Ricardo Faleiro}
\email{ricardofaleiro@tecnico.ulisboa.pt}
\affiliation{Instituto de Telecomunicações, 1049-001, Lisbon, Portugal}
\author{Paulo Mateus}
\affiliation{Instituto de Telecomunicações, 1049-001, Lisbon, Portugal}
\affiliation{Departamento de Matemática, Instituto Superior Técnico, Avenida Rovisco Pais 1049-001, Lisbon, Portugal}
\author{Emmanuel Zambrini Cruzeiro}
\affiliation{Instituto de Telecomunicações, 1049-001, Lisbon, Portugal}
\email{\\emmanuel.zambrinicruzeiro@gmail.com}
\maketitle

\begin{abstract}
Semi-device-independent quantum key distribution aims to achieve a balance between the highest level of security, device independence, and experimental feasibility. Semi-quantum key distribution presents an intriguing approach that seeks to minimize users' reliance on quantum operations while maintaining security, thus enabling the development of simplified and hardware fault-tolerant quantum protocols. In this work, we introduce a coherence-based, semi-device-independent, semi-quantum key distribution protocol built upon a noise-robust version of a coherence equality game that witnesses various types of coherence. Security is proven in the bounded quantum storage model, requiring users to implement only classical operations, specifically fixed-basis detections.
\end{abstract}

\section{Introduction}

Modern cryptosystems based on the hardness of computational assumptions are  vulnerable to developments of computational power and new algorithms, particularly when considering quantum computers \cite{quantumcrypto,Shor}. Quantum cryptography offers a solution to this problem by providing information security based on the laws of quantum mechanics, so that protocols are resistant to any attack, no matter how much  computational power is allowed.

However, in a practical scenario, the users are usually incapable of verifying that their devices follow the description of the protocol and must trust their manufacturer. Not only could it be difficult to create devices that perfectly satisfy the assumptions of the protocol, but it could also be the case that the devices have been maliciously constructed. Quantum cryptography as originally introduced, e.g. BB84 \cite{BB84}, had been predicated on assumptions about the internal description of the physical systems, such as the source and detectors, which opens the door to various side channel attacks. However, in 1998, Mayers and Yao put forth the concept of self-testing \cite{mayers1998quantum,MY04}, ensuring that, if certain statistical tests are met, then the source could be guaranteed to satisfy the desired assumptions, e.g. for quantum key distribution (QKD). Thanks to this property, two users may certify on their own whether their apparatus are functioning as they should. This is the key idea behind \textit{device-independent} (DI) QKD, which aims at unconditional security in the presence of imperfect, or maliciously designed, devices.  

DI QKD is the golden standard of QKD: it allows unconditional security based on the laws of physics even for untrusted or maliciously designed devices \cite{Vazirani_2014,Arnon-Friedman2018}. Device-independence also finds other applications in cryptography: random number generation \cite{Pironio_2010_Random, AMPS12}, coin flipping \cite{AC11}, and authorization to private databases \cite{FG}. DI QKD, for now, remains extremely challenging. The first proof-of-principle experiments were performed only very recently \cite{Nadlinger2022,Zhang2022,Liu2022}, almost 40 years after the invention of BB84. 

It then becomes naturally interesting to study scenarios which may reach a compromise between experimental challenge and security: for example, by assuming than the users have a partial description of the devices --- say, one device is trusted while the other is not. These are called \textit{semi-device-independent} (SDI) protocols. Existing approaches include: bounding the dimension of the states \cite{Pawlowski_Brunner_2011, Chaturvedi2018,  Tavakoli_2018, Tavakoli_2020}, bounding the expectation value of some appropriate operator, e.g energy \cite{VanHimbeeck2017}, bounding the information content of quantum states \cite{Tavakoli2020info,Tavakoli2022info}, or their overlap \cite{Shi2019}.

Another interesting question is what aspects of a protocol must be strictly quantum in order to guarantee security through the laws of quantum mechanics. \textit{Semi-quantum} (SQ) cryptography \cite{iqbal2019semiquantum} attempts to answer this question. There, one is interested in minimizing the quantum technological requirements of the systems and/or users involved in the protocol without compromising security. One way to do this is to limit Alice or Bob to a single measurement basis, for instance, or force them to only perform detection or reflection of photons \cite{boyerSQKD,Massa2022experimentalsemi}.

In this work, we take a first step towards the intersection of semi-device-independent and semi-quantum protocols. The security proof for our QKD protocol relies on specific properties of the detection operators used by Alice and Bob in their respective labs---modelled as simple single-basis measurements---, but the source and measurements controlled by the outside servers remain unspecified. 

The protocol is based on a generalization of the \textit{Coherence Equality} (CE)  game  introduced by del Santo and Dakić \cite{Del_Santo_2020} to a noise-robust version. Its security follows the standard approach of DI QKD proofs \cite{Pironio_2010_Random}, and is established from the gap between the optimal quantum and classical performance bounds in the game.

The paper is organized as follows. In Section \ref{sec:two}, we introduce the scenario and describe the quantum correlations it exhibits, namely we determine the optimal quantum bounds using lower and upper bounding techniques \cite{sdp}. We further interpret the game as a coherence witness, and discuss the role of randomness in the game.  In Section \ref{sec:three}, we introduce the SDI SQKD protocol and prove the security of the protocol in the bounded quantum storage model. Finally, we present our conclusions in Section \ref{sec:four}.

\section{The Coherence Equality game}
\label{sec:two}

\subsection{Scenario and basic definitions}
\label{subsec:scenario}

The scenario is depicted in Fig.\ref{fig:setup_QKD}.  An untrusted source sends a (quantum) state $\rho_\text{AB}$ to Alice and Bob's labs. Alice and Bob are allowed to perform a local single basis measurement on their side of the system, where they either block the path possibly taken by the particle, or leave it undisturbed. Their actions are governed by random bits \(x\) and \(y\), respectively for Alice and Bob, such that \(0\) corresponds to leaving the path undisturbed and \(1\) to blocking it.
In the case where they block the path, they use a single photon detector to determine whether any particles were  present on their side, or not. This information is given by output bits $\alpha, \beta$, for Alice and Bob, respectively, --- where \(0\) corresponds to no particle detected and \(1\) to at least one particle detected. Finally, Alice and Bob either send their part of the initial system undisturbed, or a vacuum state produced from blocking, to the untrusted servers $S_A, S_B$, respectively, where a generalized quantum measurement can be performed. The final measurements then produce outcomes $a,b$, for Alice, Bob, respectively, and  they win the game if \(a\oplus b = x \oplus y\). The statistics of such an experiment can be described by a probability distribution $p(ab|xy)$, and we consider the following linear functional of the probabilities, giving the winning probability of the game
\begin{equation}\label{eq:witness}
    P_\text{win}:=\frac{1}{4}\sum_{a,b,x,y}\mathbf{1}_{a\oplus b=x\oplus y}\,p(ab|xy),
\end{equation} 
where $\mathbf{1}_F$ is 1 whenever \(F\) holds and 0 otherwise. \\

   \begin{figure*}[t]
    \centering
    \includegraphics[trim={5mm 0cm 5mm 0cm},clip,width=0.9\textwidth]{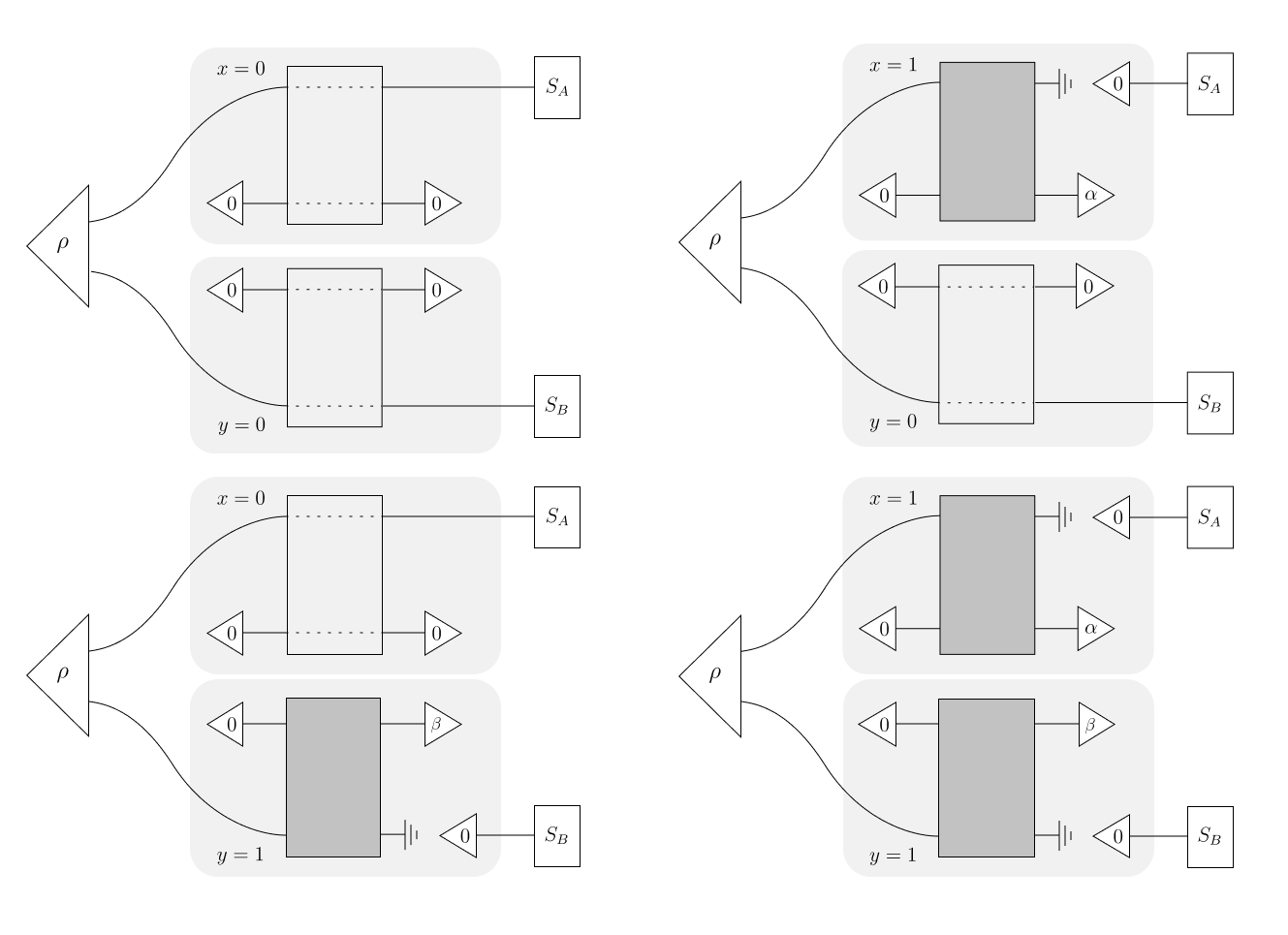}
    \caption{Diagram of the Coherence Equality (CE) game, for each of the four equally probable detection configurations:  Alice and Bob do not try to detect, for \(x=y=0\) (top-left); Alice tries to detect and Bob does not, for \(x=1\) and \(y=0\) (top-right); Bob tries to detect and Alice does not, for \(x=0\) and \(y=1\) (bottom-left);  Alice and Bob both try to detect, for \(x=y=1\) (bottom-right). The detections occur inside the trusted labs of each player (light gray regions), and are represented by the blocks in dark gray. The detection process has two input states: the unknown state from the source and an ancillary input state, always initialized at \(0\); and a single output given by the classical bit (\(\alpha\) for Alice and \(\beta\) for Bob), informing if the detection was successful or unsuccessful. When the detection is attempted, whether it may be successful or not, the quantum system is blocked from its original path (represented by the \textit{ground symbol}) and a vacuum state \(\ketbra{0}\) is communicated to the servers instead. In the case where no detection is attempted, the original state proceeds undisturbed through the  lab (represented by a dashed line) to the untrusted servers \(S_A,S_B\), and the classical system remains unchanged. The servers \(S_A,S_B\) are constrained by the no-signalling condition which forbids them from communicating their final measurement results, i.e  bits \(a,b\), but can have shared randomness or entanglement.}
    \label{fig:setup_QKD}
\end{figure*}

There are some noticeable differences between our setup and the original setup of del Santo and Dakić \cite{Del_Santo_2020}. To start, in the original setup Alice and Bob perform the final measurements, whereas the blocking is done by some neutral Referees. In our case, since we want to limit the powers of Alice and Bob, we have placed them in the role of randomly blocking the channels, whereas some untrusted servers $S_A$ and $S_B$ can perform the final measurements. Furthermore, Alice and Bob  receive one bit of information each when blocking/detecting. Finally, and more crucially, in \cite{Del_Santo_2020} it was assumed that the source was verified and only outputted a single photon, meaning that there was exactly a \textit{single photon shared between the labs}. Instead, here, we consider the possibility of an untrusted source outputting any number of photons, 
thus allowing \textit{more than one photon per and between labs}. Alice and Bob will constrain these  multi-photon correlations that may arise through a statistical constraint, called the \textit{single-detection constraint}. This constraint can be operationally verified via their detection results, when both attempt a detection inside their lab.

\subsection{The game in two stages}

The game can naturally be separated into two stages.
In the first stage,  Alice or Bob may choose to detect their particle. They use single-particle detectors that only  need to reliably distinguish between vacuum and non-vacuum states. These measurements can be represented by the detection operators $D_{\alpha|x},D_{\beta|y}$ where, for $x,y=1:$
\begin{equation}\label{eq:detectorproperties}
 \left \{  D_{0|1} = \ketbra{0};\; D_{1|1} = \sum_{i=1}^{i=d} \ketbra{i}\right \}
\end{equation}
where $d$ corresponds to the number of photons allowed per lab, which implies a \(d+1\) dimensional system in each lab, that is,  taking into account the vacuum state.  It should be noted that since the measurements 
 are destructive, whenever a detection is attempted the post-measurement state will be always set to \(\ketbra{0}\), regardless of the particular outcome observed  (see Fig. \ref{fig:setup_QKD}). 

When the users choose not to block ($x,y=0$), they simply act with the appropriately normalized identity operator on their subsystems,
\begin{equation}
    D_{0|0} = \frac{\mathds{1}}{d+1}
\end{equation}

Therefore, after passing through Alice's and Bob's labs, for inputs $x,y$, the original state $\rho_{AB}$ is transformed into $\rho_{xy}$, for
 \(\rho_{00} = \rho_{AB};\;  \rho_{01} = \Tr_B{(\rho_{AB})}\otimes\ketbra{0}_B;\;
 \rho_{10} = \ketbra{0}_A\otimes \Tr_A{(\rho_{AB})};\;
 \rho_{11} =\ketbra{0}_A\otimes \ketbra{0}_B\). In this way, the coherence equality game can alternatively be understood as a game played by the servers, where they receive quantum inputs instead of classical ones, which accounts for the difficulty of computing the equality \(a\oplus b=x\oplus y\), otherwise trivial for classical inputs.
    
The probability distribution describing the statistics for this first step of the protocol via the Born rule, is defined as
\begin{equation}
    \tilde{p}(\alpha\beta|xy) = \text{Tr}[\rho D_{\alpha|x}\otimes D_{\beta|y}].
\end{equation}

Furthermore, the \textit{single-detection} condition will appropriately constrain some of the probability distributions, namely it demands that, 
\begin{equation}
\label{d_eps}
    \tilde{p}(1,1|1,1) \leq d_\epsilon.
\end{equation}

That is, the probability of both Alice and Bob detecting a non-vacuum state in each of their labs is bounded by \(d_\epsilon\). Thus, in our scenario, we consider general multi-photon states by replacing the single-particle condition with a single-detection 
condition, which can be verified by the users.

In the second stage of the game the state $\rho_{xy}$ is forwarded to the external servers \(S_A\) and \(S_B\), where each server may apply a Positive Operator-Valued Measure (POVM), which is a set of positive semidefinite operators $\{A_{a}\}_{a=1}^{n_A}$ for Alice's server, and $\{B_{b}\}_{b=1}^{n_B}$ for Bob's, that sum to the identity.
\begin{align}
    & A_a \geq 0,\quad \forall a,\qquad \sum_{a=1}^{n_A} A_a = \mathds{1},\\
    & B_b \geq 0,\quad \forall b,\qquad \sum_{b=1}^{n_B} B_b = \mathds{1}
\end{align}
where $n_A, n_B$ are the number of possible outcomes for each POVM, respectively.

Finally, for this stage the probability distribution is computed via the Born rule as,
\begin{equation}
    p(ab|xy) = \text{Tr}[\rho_{xy}A_a\otimes B_b]
\end{equation}

The problem we wish to solve is to maximize the linear functional Eq.~(\ref{eq:witness}) over the states and POVMs. This represents  an instance of Semidefinite Programming (SDP), and can therefore be efficiently solved for a fixed dimension $d$.

In the following paragraph we describe an equivalent formulation of the single detection constraint,  which will provide a useful perspective for the calculation of the quantum bounds.  

Regarding the constraint imposed by Eq. (\ref{d_eps}), one can alternatively consider that the above condition  imposes a direct constraint on the elements of the state $\rho^{(d)}$, describing \(d-\)photons per lab, instead of restricting the first-stage operational statistics. This will be fruitful to optimize the quantum model (over states and measurements) in order to find the maximum winning probability of the CE game. According to this perspective, we omit from the description the outcomes $\alpha$, $\beta$, and merely use the blocking operators to create the four states $\rho_{xy}$, from a state $\rho^{(d)}$ having some appropriate matrix elements bounded. The elements we wish to bound in $\rho^{(d)}$ are $\{\rho_{(i,i)}|\; i \in  \mathcal{I}_d \}$,  where \( \mathcal{I}_d\) is a set that specifies the amplitude elements corresponding to more than one excitation per party for the \(d\) photon case, i.e.
\begin{equation*}
    \mathcal{I}_d =  \left \{\left[\alpha (d +1) +2, (\alpha+1) (d+1) \right] |\;  \alpha \in [d] \right. \}.
\end{equation*}

The single-detection constraint will then bound the sum of those elements, i.e.
\begin{equation}
\label{state constrained}
    \sum_{i \in \mathcal{I}_d } \rho_{i,i} \leq d_\epsilon.
\end{equation}That is, it demands that the total probability of having simultaneously non-vacuum states in each lab is bounded.  It is easy to show that for a generic two-qubit state (i.e. \(d=1\)), this constraint yields \(\rho_{(4,4)} \leq d_\epsilon\), where \(\rho_{(4,4)}\) is the element  corresponding to the state \( \ketbra{11}\). Furthermore, if we demand that \(d_{\epsilon}= 0\) then this leads to \(\rho_{(4,4)} = 0\), which recovers the single-particle condition.

\subsection{Boundary of the quantum set}

The most general bi-partite pure state in an \(n \otimes n\) dimensional Hilbert space is given by
\begin{multline}
  \ket{\Psi_n} = C_{00}\ket{00}+\sum_{i\in [n]} C_{i0} \ket{i0}+\\ \sum_{j\in [n]} C_{0j} \ket{0j}+\sum_{i,j\in [n]\times[n]} C_{ij} \ket{ij}.
\end{multline}

It is straightforward to see that the ideal strategy for coherence equality game requires one photon per lab, since it is the most economical way to guarantee that Alice and Bob win with certainty. In fact, having more than one photon in each 
  lab is inconsequential for the coherence equality game --- say, either Alice and Bob have one photon each, in which case they always win and thus dispense the need for more photons, or if Alice does not have any photons in her lab, then regardless of whether Bob has just a single photon or more they can only win half the time since $S_A$'s best strategy will always be a random guess. This suggests that, in the quantum case one can  consider a simpler state than the previous general state as an equivalent resource, obtained by assuming that any  \(\ket{n}\) for \(n \geq 1\) can be mapped to \(\ket{1}\), yielding a general two-qubit pure state,
\begin{equation}
\label{two qubit state}
    |\psi\rangle = c_{00}|00\rangle + c_{01}|01\rangle + c_{10}|10\rangle + c_{11}|11\rangle
\end{equation} where $|c_{00}|^2+|c_{01}|^2+|c_{10}|^2+|c_{11}|^2 = 1$, and \(|c_{00}|^2=|C_{00}|^2, |c_{10}|^2 = \sum_{i} |C_{i0}|^2,  |c_{10}|^2 = \sum_{j} |C_{0j}|^2,\) and \( |c_{11}|^2 = \sum_{i,j} |C_{ij}|^2\). Indeed, qubits are optimal in our problem, thus there is no loss of generality by considering Eq. (\ref{two qubit state}), something we verified this numerically for small dimensions. 

We intend to maximise Eq. (\ref{eq:witness}). First, we note that real states and measurements are sufficient for optimality, as in the case of Bell inequalities \cite{Pal2008,Mckague2009}.

Additionally, note that for fixed $d_\epsilon$, the strongest correlations can be achieved with a state that saturates the constraint $\rho_{44} \leq d_\epsilon$. Indeed, reducing $\rho_{44}$ can only lead to weaker correlations. It is straightforward to note this since for the problem at hand $|11\rangle$ is optimal, i.e. achieves 100$\%$ success probability.

Furthermore, we assume \(S_A\) and \(S_B\) use the same measurement operators\footnote{Note this marks a strong difference between the CE game and Bell nonlocal games, where in the latter non-commuting measurements are typically required for optimality of a quantum behaviour.}, described by the following equatorial projections on the Bloch sphere
\begin{equation}
    M_a = \frac{1}{2}\Big(\mathds{1} + (-1)^a
     (n_x X+ \sqrt{1 - n_x^2} Z)\Big), 
\end{equation}where we have used the fact that \(n_z = \sqrt{1 - n_x^2}\), since for equatorial projections \(n_y=0\). Finally, we assume that $c_{01} = c_{10}$, since the game is symmetric with respect to both the players' and servers' operations and outputs.

We have verified that all the above assumptions give rise to the optimal solution using the SDP solver SDPT3 \cite{Toh1999}. Taking all of them into account, the simplest form of an optimal state to play the CE game is given by 
\begin{equation}
       c_{00}|00\rangle + c_{01}(|01\rangle +|10\rangle)+ \sqrt{d_{\epsilon}}|11\rangle.
\end{equation}

Using the simplified form of the state and of the measurement, the optimization problem reduces to
\begin{align}\label{eq:succ_simplified}
    P_\text{win} & = \frac{1}{8}\Big[ 2c_{00}\sqrt{d_\epsilon}n_x^2 - 8c_{01}\sqrt{d_\epsilon}n_x \sqrt{1-n_x^2} \nonumber\\
    & - (1+3d_\epsilon)(n_x^2-2) + 4c_{01}^2(1+n_x^2)\nonumber\\
    & + c_{00}^2 (2+n_x^2)\Big]
\end{align}
subject to
\begin{equation}\label{eq:succ_constraint}
    c_{00}^2 + 2c_{01}^2 = 1 - d_\epsilon.
\end{equation}

The problem can be readily converted into a polynomial optimization problem, which can then be efficiently solved using lower and upper bounding procedures, thereby ensuring convergence to a global maximum. We use an SDP see-saw \cite{Werner2001} for the lower bounds, and YALMIP \cite{Lofberg2004} for the upper bounds, to show optimality of our results. The boundary of the quantum set is given by the full line shown in Fig. \ref{fig:Qbounds}.

\begin{figure}[h!]
    \centering
    \includegraphics[width=1\columnwidth]{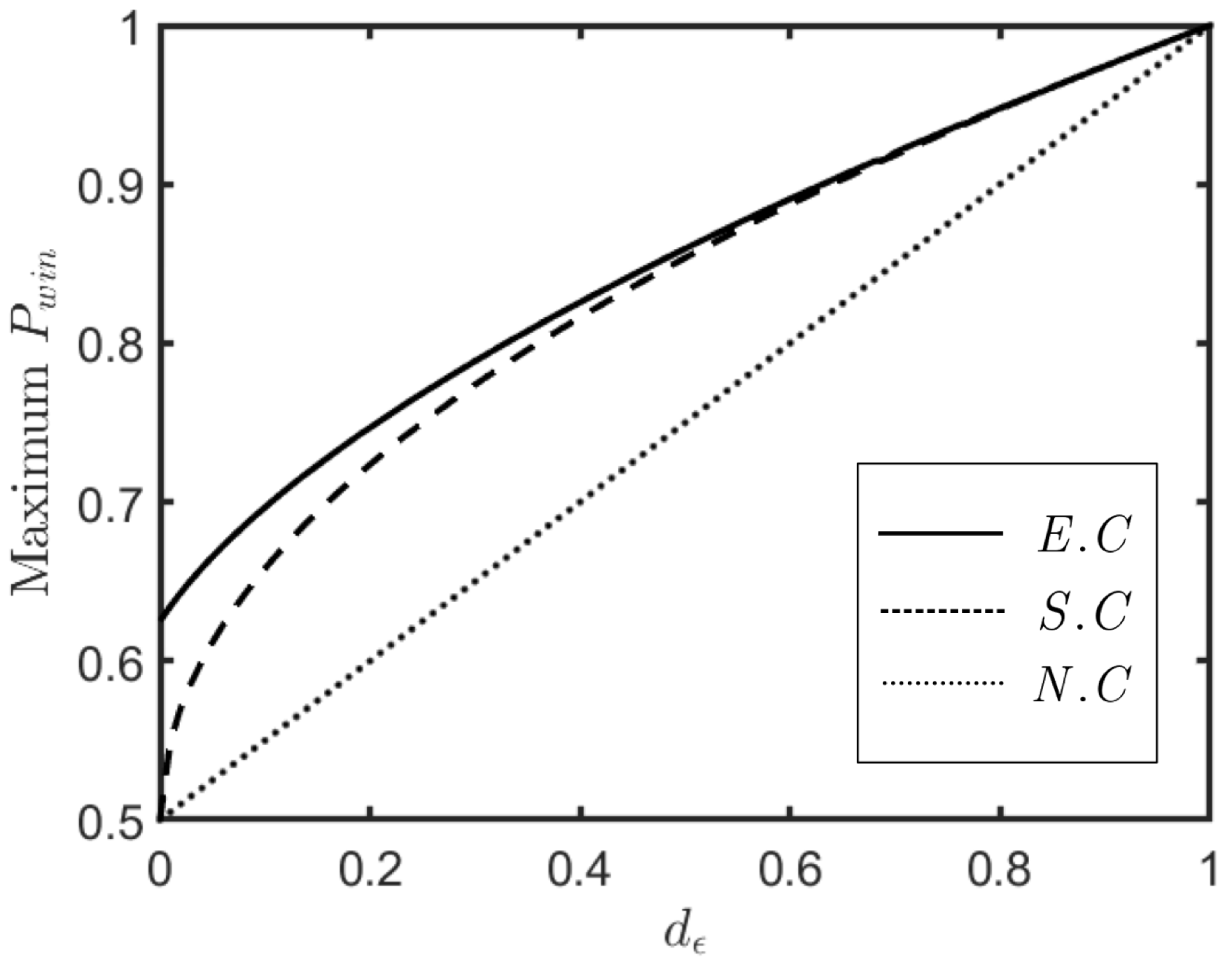}
    \caption{Optimal bounds for the winning probability given by different types of coherent resources. The dotted line corresponds to non-coherent (\textit{N.C}) states. The dashed line is the optimal bound for separable coherent (\textit{S.C}) states. Finally, the full line shows the overall optimal bound, attained using entangled coherent (\textit{E.C}) states.}
\label{fig:Qbounds}
\end{figure}

\subsection{Coherence witness and analytic solutions}

Through the introduction of the \(d_{\epsilon}\) parameter, we are able to identify three distinct classes of coherence resources which form a strict hierarchy in terms of their performance in the CE game. The three classes are:
\begin{enumerate}
    \item Entangled coherent states:
    \begin{multline}
    \label{entcoherent}
        \sqrt{x(d_{\epsilon})}\ket{00} +{\sqrt{d_{\epsilon}}\ket{11}} \\+ \sqrt{\frac{1-d_{\epsilon}-x(d_{\epsilon})}{2}}\left(\ket{01}+\ket{10}\right).
         \end{multline}
    \item Separable coherent states:
    \begin{equation}
          ( \sqrt{1-{d_{\epsilon}}}\ket{0}+\sqrt{d_{\epsilon}}\ket{1}){\otimes \ket{1}}.
     \end{equation}
    % \item Unbalanced entangled coherent states:
    % \begin{equation}
    % \sqrt{1-d_{\epsilon}}\ket{00}+e^{i\phi}\sqrt{d_{\epsilon}}\ket{11}
    % \end{equation}
    \item  Mixed non-coherent states:
    \begin{equation}
    ({1-d_{\epsilon}})\ketbra{10}+d_{\epsilon}\ketbra{11}.
    \end{equation}
  
\end{enumerate}
For the entangled coherent states (Eq.\ref{entcoherent}), \(x(d_{\epsilon})\) represents the analytic expression of a function of \(d_{\epsilon}\) obtained by solving the equations in Appendix~\ref{AppendixA}.

In this way, the game can also be interpreted as coherence witness, allowing the certification between three  different types of coherence resources, as given by Alice and Bob's ability to win the game and  \(d_{\epsilon}\). Although the type of coherence witnessed is not basis-invariant \cite{set_coherence, wagner2023inequalities}, as it pertains to a basis of occupation modes and is not invariant under unitaries, it may still be of interest since a description in terms of occupation modes is naturally adopted when analysing quantum optics experiments. 

It is worthy to point out that for \(d_{\epsilon}=0\), separable coherent and mixed non-coherent states reduce to pure non-coherent states, meaning that the previous resources can only be defined for positive values of \(d_{\epsilon}\), and that only entangled coherent states and pure non-coherent states were considered in the original model \cite{Del_Santo_2020} i.e. for \(d_{\epsilon}=0\). In Fig. \ref{fig:VN} we show how the entanglement of the optimal coherent states behaves as a function of  \(d_{\epsilon}\). The entanglement is maximum for \(d_{\epsilon}=0\), corresponding to the state \(\ket{\psi}= (\ket{01}+\ket{10})/{\sqrt{2}}\), and reaches zero for \(d_{\epsilon}=1\) corresponding to the separable state \(\ket{11}\).

\begin{figure}[h!]
    \centering
    \includegraphics[width=1\columnwidth]{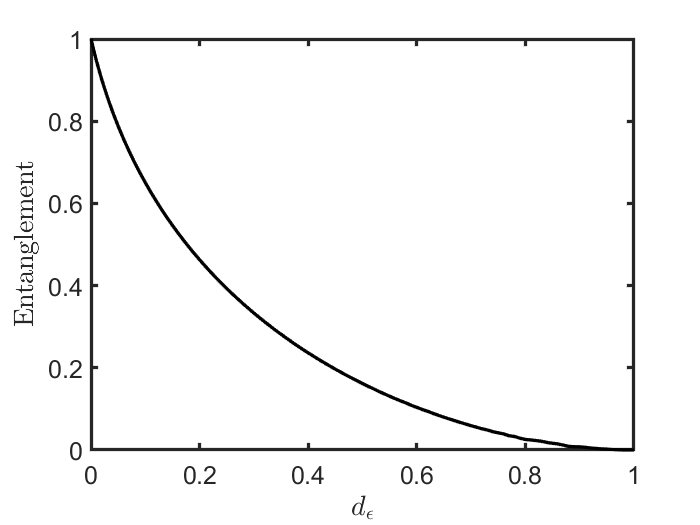}
    \caption{The entropy of entanglement, measured in bits, of the optimal entangled coherent states that reach the maximum winning probability (the solid line  in Figure 2), as a function of \(d_{\epsilon}\). }
\label{fig:VN}
\end{figure}

Regarding the separable coherent states and mixed non-coherent states, trivially they will have no entanglement for any value of \(d_{\epsilon}\). As such, only the entangled coherent states can give rise to genuine randomness, and only these will be useful for the purposes of the  QKD protocol we present next. We show this explicitly in Fig. \ref{fig:Hmin} by plotting the relationship between \(H_{min}\) as a function of the winning probability, for various values of \(d_{\epsilon}\).

\subsection{Randomness in the CE game}
\label{section:random}

As we have seen, for certain values of \(d_\epsilon\) and \(P_\text{win}\) the game can certify that the source is sending a particle in a superposition to Alice and Bob's labs, that is, an entangled coherent state is being used as a resource. This suggests that, in this case,  Alice and Bob's detection outcomes may have some randomness. In order to verify this, we employ SDP techniques to maximize Alice's (or alternatively Bob's) marginal probability when she attempts a detection, i.e. \(p(\alpha|x=1)\), as a function of the winning probability \(P_\text{win}\) and \(d_\varepsilon\). The maximum of the marginal will also be an upper-bound on the guessing probability of an adversary, which we refer to as \(P_g(P_\text{win},d_{\epsilon})\). The bounds are presented in Fig.\ref{fig:Pguess}.

\begin{figure}[h!]
\centering\includegraphics[width=1\columnwidth]{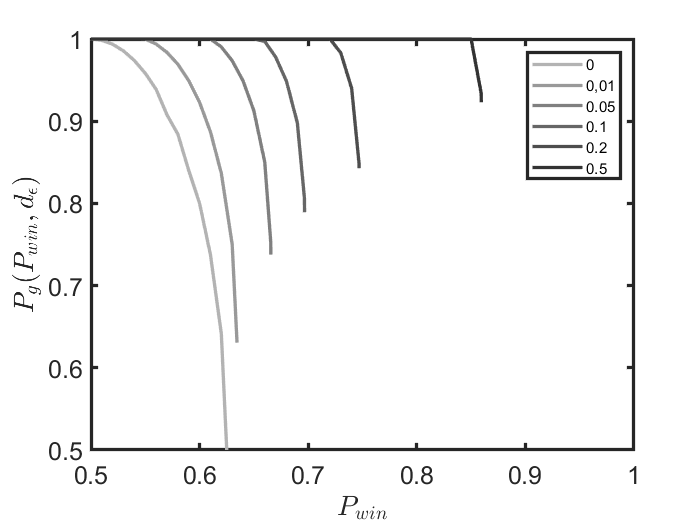}
    \caption{Upper bound on the guessing probability of Alice's outcome, as a function of \(P_{\textup{win}}\) up to \(\textup{Max}(P_{\textup{win}}(d_{\epsilon}))\), for   \(d_{\epsilon}\) taking values  \(0,0.01,0.05,0.1,0.2,\) and \(0.5\), from left to right.}
\label{fig:Pguess}
\end{figure}

We verify that for values of the winning probability that surpass the separable-coherent bound, at which point one requires entanglement, the  detection of a particle by Alice becomes random, and the greater the gap from the separable-coherent bound the less predictable it is. Also, this relationship is stronger for smaller  values of the noise-parameter $d_{\epsilon}$. Alternatively this bound can be represented by the min-entropy,  $H_{\min}(P_{\textup{win}},d_\epsilon)=-\log_2(P_{g}(P_\text{win},d_\epsilon))$, and is plotted in Figure \ref{fig:Hmin}.

\begin{figure}[h!]
    \centering
    \includegraphics[width=1\columnwidth]{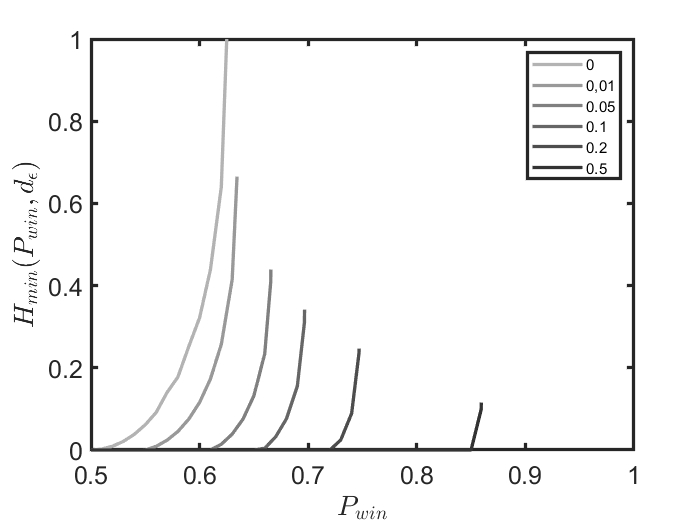}
    \caption{Randomness shared between Alice and Bob, measured in bits by the min-entropy (\(H_{\min}\)), as a function of \(P_{\textup{win}}\) up to \(\textup{Max}(P_{\textup{win}}(d_{\epsilon}))\), for   \(d_{\epsilon}\) taking values  \(0,0.01,0.05,0.1,0.2,\) and \(0.5\), from left to right. }
\label{fig:Hmin}
\end{figure}

 The min-entropy takes positive values precisely when the probability surpasses the optimal bound achievable by the separable coherent resources. 
These results highlight an important feature of the game, that a set of state and measurements $(\rho,M_{A}^a,M_{B}^b)$ picked by some adversary cannot simultaneously be used to win the CE game arbitrarily well and allow the outcomes $\alpha$ and $\beta$ to be completely predictable. We capture this feature of the game in the following proposition.

\begin{restatable*}{prop}{lemmasdpNoGo}
\label{lemma:sdpNoGo}
Consider $P_{g}(P_\text{win},d_\epsilon)$, the upper-bound on an adversary guessing probability of the players' detection outcomes. Then, for any given round $i$ in the protocol, the following three conditions cannot simultaneously hold:

\begin{enumerate}

    \item The no-signalling condition is satisfied and our assumptions for  Alice's and Bob's detectors are correct.

    \item The set \(S_i=(\rho,M_{A}^a,M_{B}^b)_i\) of the state and measurement operators picked by an adversary wins the CE game with probability at least \(P_\text{win}\), and satisfies the single-detection condition with probability at least \(1-d_\epsilon\). 
    
    \item There is an outcome \(\alpha'\in\{0,1\}\)  such that \(p(\alpha=\alpha'|x=1)>P_{g}(P_\text{win},d_\epsilon)\).
    
\end{enumerate}
\end{restatable*}
\begin{proof}
The first condition is assumed to hold in a correct implementation of the setup, under which the incompatibility between the second and third conditions comes directly from the bounds on the detection probability (see Fig.\ref{fig:Pguess} and Fig.\ref{fig:Hmin}).
\end{proof}

\section{QKD protocol}
\label{sec:three}

The protocol consists of \(m\) rounds (with indices in \([m]\equiv\{1,\dots,m\}\)), where Alice and Bob uniformly and randomly choose their respective inputs, \(x,y\in\{0,1\}\). The set \(\mathbf{D}\) consists of the indices of the rounds where both Alice and Bob chose to attempt detection, i.e. \(x=y=1\). To perform the step of classical post-processing, Alice chooses a random subset \(\mathbf{B}\) of \(\mathbf{D}\) with size \(\gamma |\mathbf{D}|\), where \(\gamma>0\) is small. With the outcome information of set \(\mathbf{B}\), Alice and Bob estimate the single-detection parameter \(d_\epsilon\) (Equation \ref{eq:singledetectionCE}) and with the totality of the \(m \) rounds they compute the winning probability \(P_\text{win} \) for the coherence equality game (Equation \ref{eq:interferenceCE}).

From the observed values of  \(P_\text{win} \) and \(d_\epsilon\), for a given security parameter \(\mu\), Alice and Bob compute a key of size $\kappa |\mathbf{D}|$ that is secure against a memory bounded eavesdropper with probability \(1-\mu\), and we show that \(\kappa\) is positive for certain values of \(P_\text{win} \) and \(d_\epsilon\).

\subsection{Protocol description}
\label{sec:setup}

An untrusted source sends a quantum state to Alice's and Bob's labs, and Alice (resp. Bob) randomly chooses to either attempt to detect the particle ($x=1$, resp. $y=1$) or do nothing ($x=0$, resp. $y=0$). Attempting to detect returns an output $\alpha$ (resp. $\beta$) which indicates whether or not the particle was detected. Alice (resp. Bob) then sends their half of the state to the untrusted server $S_A$ (resp. $S_B$), which output outcomes $a$ (resp. $b$), as represented in Figure \ref{fig:setup_QKD}.

The precise steps of the protocol are as follows:
\begin{enumerate}
    \item Take as input parameters $\mu,\eta$. For each round $i\in[m]$, Alice and Bob receive state $\rho_i$ from a source and randomly choose to either detect or do nothing, according to their secret bits $x_i,y_i$ respectively. In case they attempt detection, they receive outcomes \(\alpha_i,\beta_i\), and every round they receive outcomes \(a_i,b_i\) from the untrusted servers.
    \item Alice and Bob share their choices of inputs, $X$ and $Y$ respectively, and the outputs $A,B$ from their respective servers $S_A$ and $S_B$ on a public authenticated channel.
    \item Alice chooses a random set \textbf{B} as a fraction $\gamma$ of the rounds in \textbf{D}, where \textbf{D} is the set of rounds $i$ such that $x_i=y_i=1$, i.e. when both Alice and Bob chose detection, and share the outcomes $\alpha,\beta$ of their detections.
    
    \item From their input choices and from the outputs of $S_A$ and $S_B$, $A$ and $B$ respectively, Alice and Bob calculate the fraction of rounds winning the CE game,
    \begin{equation}
    \hat{P}_\text{win} =  \frac{1}{m}\sum_{i\in[m]}\mathbf{1}_{a_i\oplus b_i=x_i\oplus y_i},
    \label{eq:interferenceCE}
    \end{equation}
    and the fraction of rounds \textit{not} satisfying the single-particle condition, estimated from the public results of rounds in \textbf{B},
    \begin{equation}
    \hat{d_\epsilon}\equiv \frac{1}{|\mathbf{B}|}\sum_{i\in \mathbf{B}}\mathbf{1}_{\alpha_i=\beta_i=1}.
    \label{eq:singledetectionCE}
    \end{equation}
    \item Bob inverts the bits of his set of detection results, $\beta$.\footnote{This step is necessary, since up to here $\alpha$ and $\beta$ are expected to be anti-correlated.} Using the information of rounds in \textbf{B}, Alice and Bob perform  \emph{information reconciliation}. 
    If the fraction of agreement between their results is smaller than $\eta$, they abort the protocol, otherwise, Bob communicates $\ell = H(1.1\eta)|\mathbf{D}|+\log(2/\varepsilon)$  bits of information to Alice.
    
    \item For the security parameter $\mu$, determine $\kappa(\widetilde{P}_\text{win},\Tilde{d_\epsilon},\mu)$ such that, with probability at least $1-\mu$, the protocol is secure and they obtain
    \begin{equation}
        \kappa |\mathbf{D}|-\ell+O(\log 1/\varepsilon)
    \end{equation}
    bits of secret key.
\end{enumerate}

\subsection{Security of the protocol}

In this subsection we give the assumptions made in the security of the protocol, as well as a sketch of the security proof, which is found integrally in the Appendix. We have elected to simply show a sketch of the proof since it follows the standard techniques of security proofs in the SDI-QKD literature, with the major difference being the bound on the min-entropy of Alice's and Bob's results derived from their playing of the CE game with the servers.

\subsubsection{Security assumptions}
\label{assumptions}

The protocol is proven to be secure, when we assume that Eve cannot try to estimate the results in Alice's and Bob's labs after getting partial information about them (i.e. after the round of information reconciliation). This is referred to as the ``bounded quantum storage'' model as it is equivalent to considering that Eve has some space or time bound on her quantum memory forcing her to perform all her measurements before Alice and Bob share any information about their results.

 We do not assume that the devices behave identically and independently in each round, and allow for internal memory that takes into account all previous rounds. Alice and Bob should also be able to ensure that their devices function as particle detectors, following the requirements given in Equation \ref{eq:detectorproperties}, and to verify from timing constraints that the quantum measurements performed by the servers  are spacelike separated in order to respect the non-signalling condition.

 \subsubsection{Security proof sketch}

The security of the protocol is based on the fact that a set of state and measurements $(\rho,M_{A}^a,M_{B}^b)$ picked by the adversary cannot simultaneously be used to win at the CE game and allow for Alice and Bob's  detection outcome bits $\alpha$ and $\beta$ to be completely predictable.
This feature of the game is captured in Proposition \ref{lemma:sdpNoGo}, which places a bound on the information that Eve can obtain on the detection results of Alice and Bob, whenever both decide to attempt detection in their labs.

To apply Proposition \ref{lemma:sdpNoGo}, we use the results obtained by Alice and Bob in their $m$ rounds of the CE game to estimate the behavior of the setup.

In the case of $P_\win$, Alice and Bob can estimate their knowledge of the devices by sharing the entirety of the rounds (values of $x,y,a,b$) and applying the Azuma-Hoeffding inequality \cite{azuma1967}. We show that they can apply their estimation to the detection rounds in Lemma \ref{lemm:boundI}. For the case of $d_\epsilon$, they must use a subset of the detection rounds so as not to lose the entire key. They can still estimate their knowledge using a Chernoff bound and the Azuma-Hoeffding inequality (Lemmas \ref{lemm:F1} and \ref{lemm:boundd}).

From the results of Lemmas \ref{lemm:boundI} and \ref{lemm:boundd}, Alice and Bob obtain values $\widetilde{P}_\win,\widetilde{d_\epsilon}$ bounding the behaviour of whole setup. These two parameters can be used to lower-bound the amount of private information between them, as enunciated in the following theorem, which is proven in full in the Appendix, and along the lines of \cite[Section A.2]{Pironio_2010_Random}.

\begin{restatable*}[Bound on the min-entropy]{thm}{thmboundminentropy}
\label{thm:boundminentropy}

 Let $\mu>0$ be a security parameter and assume that the protocol does not abort, and let $\widehat{P}_\win,\hat{d_\epsilon}$ be the observed values for the CE game and single-particle probabilities. Then there exists a choice of values $\widetilde{P}_\win<\hat{P}_\win$ and $\widetilde{d_\epsilon}>\widehat{d_\epsilon}$ such that, with probability at least $1-\mu$, we have that
\begin{equation*}
    H_{{\min}}(\alpha_\mathbf{D}|ABXY)\geq H_{\min}(\widetilde{P}_\win,\Tilde{d_\epsilon}) |\mathbf{D}|.
\end{equation*}
\end{restatable*}

The main result of the security proof is the result of step 6 of the protocol, i.e.\ the privacy amplification step.

\begin{restatable*}[Privacy amplification]{thm}{thmprivacyamplification}
\label{thm:privacyamplification}
After $m$ rounds, assume that the protocol does not abort and let $\mu:=e^{-c_0 m}>0$ be a security parameter, for $c_0>0$. Then, for any $\varepsilon>0$ with probability at least $1-\mu$, Alice and Bob can perform information reconciliation by sharing $\ell$ bits of information and performing privacy amplification to obtain 
\begin{equation*}
   \kappa|\mathbf{D}|-\ell+O(\log 1/\varepsilon)
\end{equation*}
secure bits of information, where $\kappa$ is a constant that only depends on the values observed for $P_\text{win},d_\epsilon$ and the security parameter $\mu$.
\end{restatable*}

\section{Conclusion}
\label{sec:four}
We introduced a noise-robust generalization of the Coherence Equality game and used SDP techniques in order to compute its optimal quantum bounds as a function of the noise parameter. We were able to identify three distinct classes of coherence resources in the game: non-coherent states, separable coherent states, and entangled coherent states, and computed individually the bounds for all these (see Fig. \ref{fig:Qbounds}). Accordingly, the game can be interpreted as a coherence witness allowing for the certification of the type of coherence resource used and consequently of entanglement, whose values we compute for the optimal bound achieved by the entangled coherent states (see Fig. \ref{fig:VN}).

We further introduced a SDI QKD protocol, based on the Coherence Equality game,  where Alice and Bob need only to implement fixed basis measurements. Our protocol is proven to be unconditionally secure in the quantum-bounded-storage model, and is relevant mainly as a proof-of-concept for the unification of both frameworks of device-independent and semi-quantum key distribution. The novelty relies in using a coherence based game, rather than the usual Bell tests, as a basis for the security of the protocol. This allows for the certification on quantum correlations with fixed single-basis measurements, both for Alice and Bob. The security proof follows closely the standard approach found in \cite{Pironio_2010_Random}, which allows one to establish the security of the protocol, in the quantum-bounded storage model, from the performance in the game. In fact, although we cast the game within a QKD protocol, because the security proof comes from a bound on the guessing probability the game could also be alternatively adapted for random number generation.

\begin{acknowledgements}
M.S. acknowledges the Calouste Gulbenkian Foundation for its scholarship program New Talents in Quantum Technologies, and would like to thank Nikola Paunković for fruitful discussion on the  work of del Santo and Dakić.

R.F. would like to thank Flavio del Santo, and Borivojie Dakić for insightful discussions regarding the nature of the coherence equality game.  RF acknowledges funding from FCT/MCTES through national funds and when applicable EU funds under the project UIDB/50008/2020, and the QuantaGENOMICS project, through the EU H2020 QuantERA II Programme.

E.Z.C. thanks the support from Funda\c c\~ao para a Ci\^encia e a Tecnologia (FCT, Portugal) through project UIDB/50008/2020.

This work was supported in part by the QuantaGENOMICS project, through the EU H2020 QuantERA II Programme, Grant Agreement No 101017733, and by funding organisations, The Foundation for Science and Technology – FCT (QuantERA/0001/2021), Agence Nationale de la Recherche - ANR, and State Research Agency – AEI.

\end{acknowledgements}

\bibliographystyle{quantum}
\bibliography{refs}

\onecolumn\newpage
\appendix

\section{Analytical solution to the noise-robust CE game}
\label{AppendixA}

We take Eq. (\ref{eq:succ_simplified}), and derive it versus $c_{00}$, $c_{01}$, and $n_x$. To find the maximum, we impose that each derivatives must equal zero. We solve the first two for $c_{00}$ and $c_{01}$ respectively, to obtain
\begin{equation}\label{eq:AppA_0}
    c_{00} = -\frac{d_\epsilon^{1/2}n_x^2}{2+n_x^2},
\end{equation}
and
\begin{equation}
    c_{01} = \frac{d_\epsilon^{1/2}n_x\sqrt{1-n_x^2}}{1+n_x^2}.
\end{equation}
We substitute these into the third derivative, and find the following equation for $n_x$,

\begin{equation}\label{eq:AppA_1}
    n_x(4+28d_\epsilon+(12+24d_\epsilon)n_x^2+(13+4d_\epsilon)n_x^4+6n_x^6+n_x^8) = 0
\end{equation}

Using the same substitution for the normalization condition, we obtain another polynomial equation,
\begin{equation}\label{eq:AppA_2}
    4-4d_\epsilon+(12-20d_\epsilon)n_x^2+(13-14d_\epsilon)n_x^4+(6-2d_\epsilon)n_x^6+n_x^8 = 0
\end{equation}

The solution to Eqs. (\ref{eq:AppA_1}) and (\ref{eq:AppA_2}) is an analytical solution\footnote{The solution does not seem to have a compact explicit expression.} to the optimization problem considered in the main text, i.e. the SDP maximizing the noise-robust CE game success probability with a quantum model. Using Eq. (\ref{eq:AppA_0}), one then readily obtains $x(d_\epsilon) \equiv c_{00}^2 (\epsilon)$.

\section{Security proof}

In this section, we show that the QKD protocol is secure given the assumptions in Subsection \ref{assumptions}, with a linear key rate and in the presence of noise. The main result is the following theorem.

\thmprivacyamplification

\begin{proof}
Follows from Theorems \ref{thm:konig2009} and \ref{thm:boundminentropy}, and the result on information reconciliation, Theorem \ref{thm:inforeconfinal}.
\end{proof}

 Since the proof of Theorem \ref{thm:inforeconfinal}, which  pertains to the information reconciliation part of the final result, is not unique to this protocol and follows closely the standard approach found in the literature \cite{rennerphd}, we  will state the theorem and proof it 
 separately in Subsection \ref{subsec:inforec}.
 
Now we start by recalling the relation between the min-entropy and the amount of private information that Alice and Bob can extract by classical communication, expressed in the following theorem.

\begin{thm}[Privacy amplification \cite{Konig_2009}]
Suppose that there is an information reconciliation protocol requiring at most $\ell$ bits of communication. Then, for any $\varepsilon>0$, there is a privacy amplification protocol which extracts
\label{thm:konig2009}
\begin{equation}
    H_{\min}(\alpha_\mathbf{D}|\mathcal{E'})-\ell+O(\log 1/\varepsilon)
\end{equation}
bits of key.
\end{thm}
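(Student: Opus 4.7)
The plan is to compose two standard ingredients from the finite-key security literature: (i) the chain rule for conditional min-entropy under classical leakage, and (ii) the quantum leftover hash lemma (LHL). Both are available in Renner's thesis and the statement quoted from \cite{Konig_2009} is essentially their direct combination, so the work is really about stringing them together cleanly.

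First I would account for the leakage from information reconciliation. Let $M$ denote the $\ell$-bit classical message that Alice broadcasts over the public channel to reconcile her string with Bob's. After the broadcast, Eve's side information becomes $(\mathcal{E}', M)$, and the chain rule for classical leakage gives
\[
H_{\min}(\alpha_\mathbf{D}\mid\mathcal{E}', M)\;\geq\;H_{\min}(\alpha_\mathbf{D}\mid\mathcal{E}') - \ell.
\]
This captures the only way reconciliation can diminish Eve's uncertainty, since $M$ is classical and of bounded length, and the bound is tight up to constants.

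Next I would invoke the quantum LHL. Fix a two-universal family $\mathcal{F}$ of hash functions $f:\{0,1\}^{|\mathbf{D}|}\to\{0,1\}^n$ and let $F$ be chosen uniformly from $\mathcal{F}$ and announced publicly. Setting $K:=F(\alpha_\mathbf{D})$, the LHL asserts that
\[
\tfrac{1}{2}\bigl\|\rho_{KF\mathcal{E}' M}-\rho_{\mathrm{unif}}\otimes\rho_{F\mathcal{E}' M}\bigr\|_1\;\leq\;\varepsilon,
\]
provided $n\leq H_{\min}(\alpha_\mathbf{D}\mid\mathcal{E}', M)-2\log(1/\varepsilon)$, where $\rho_{\mathrm{unif}}$ is the fully mixed distribution on $n$-bit strings. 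Substituting the chain-rule bound from the previous paragraph yields an extractable key of length $H_{\min}(\alpha_\mathbf{D}\mid\mathcal{E}')-\ell-2\log(1/\varepsilon)$, which matches the claim once the $-2\log(1/\varepsilon)$ term is absorbed into $O(\log 1/\varepsilon)$.

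The main obstacle is the quantum LHL itself. Its classical analogue follows from an elementary collision-probability argument, but the quantum version must bound the trace distance uniformly over every coherent purification of $\mathcal{E}'$ that Eve could hold. The standard route is to work with a conditional R\'enyi-$2$ entropy $H_2(X\mid E)_\rho$, relate it to the conditional min-entropy through operator inequalities (equivalently, through a smoothing argument), and then control the final trace distance by a Cauchy--Schwarz-type inequality on the post-hashing state. Once this machinery is established, the composition with the chain rule is mechanical and produces the stated key length; handling noise and finite-statistics effects is then deferred to the separate information-reconciliation theorem (\ref{thm:inforeconfinal}) and the min-entropy bound (Theorem \ref{thm:boundminentropy}), which is the role played by those results in the enclosing proof of Theorem \ref{thm:privacyamplification}.
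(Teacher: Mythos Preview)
Your argument is the standard and correct derivation: chain rule for classical leakage followed by the quantum leftover hash lemma, with the $-2\log(1/\varepsilon)$ absorbed into the $O(\log 1/\varepsilon)$ term. There is nothing to fault in it.

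The only point worth flagging is that the paper does not actually prove this theorem at all. It is stated with a citation to \cite{Konig_2009} and used as an imported black box in the proof of Theorem~\ref{thm:privacyamplification}; there is no accompanying argument in the paper to compare against. So rather than matching or diverging from the paper's proof, you have supplied a proof where the paper simply defers to the literature. Your sketch is exactly the content one would find by chasing the reference (or Renner's thesis), so in that sense it is faithful to what the citation stands for.
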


To establish how the CE game provides a bound on the min-entropy $H_{\min}(\alpha_\mathbf{D}|\mathcal{E}')$, we start by considering a property of each round of the raw key generation, guaranteed by Proposition 1, which we restate once again.

\lemmasdpNoGo

The following results are the steps needed to ensure that Alice and Bob have enough statistical information at the end of the protocol to apply Proposition \ref{lemma:sdpNoGo}, taking into consideration that their devices and the adversary do not necessarily act the same way in every round. In fact, we allow the behavior at round \(i\) to be a function of all inputs and outputs up to round \(i-1\), represented by the variable \(W^i:=(X^{<i},Y^{<i}, A^{<i},B^{<i},\alpha^{<i},\beta^{<i})\).

Using the Azuma-Hoeffding inequality, we see that, for a large number of rounds, the observed behavior of the devices is close to their expected behavior, on average over all rounds.

\begin{lemm}\label{lemm:E1}
Let $\hat{P}_{\textup{win}}$ be the estimated winning probability for the CE game. Then 
\begin{equation*}
    \Pr(\frac{1}{m}\sum_{i=1}^m {P}_{\textup{win}}(W^i)\leq \hat{P}_{\textup{win}}- \varepsilon)\leq \exp{-\frac{m\varepsilon^2}{32(1+\textup{Max}({P}_{\textup{win}}))^2}}.
\end{equation*}
\end{lemm}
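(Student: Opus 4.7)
The plan is to construct a martingale difference sequence tracking, round by round, the gap between the actual win indicator and the conditional winning probability $P_{\win}(W^i)$, and then apply the one-sided Azuma-Hoeffding inequality.

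Concretely, I would fix the natural filtration $\mathcal{F}_i$ generated by all inputs and outputs of the first $i$ rounds, so that the history $W^{i+1}$ is $\mathcal{F}_i$-measurable. For each round $i\in[m]$, set $\mathbf{1}_{\win,i} := \mathbf{1}_{a_i \oplus b_i = x_i \oplus y_i}$, so that $\hat{P}_{\win} = \frac{1}{m}\sum_{i=1}^m \mathbf{1}_{\win,i}$ and, by the very definition of $P_{\win}(W^i)$, $P_{\win}(W^i) = \mathbb{E}[\mathbf{1}_{\win,i}\mid \mathcal{F}_{i-1}]$. Setting $Z_i := \mathbf{1}_{\win,i} - P_{\win}(W^i)$ then produces a martingale difference sequence, since $\mathbb{E}[Z_i\mid\mathcal{F}_{i-1}]=0$ by construction, and the partial sums satisfy $S_m = \sum_{i=1}^m Z_i = m\,\hat P_{\win} - \sum_{i=1}^m P_{\win}(W^i)$.

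Next, since $\mathbf{1}_{\win,i}\in\{0,1\}$ and $0\le P_{\win}(W^i)\le \text{Max}(P_{\win})$, the increments obey $|Z_i|\le 1+\text{Max}(P_{\win})$ almost surely. A direct application of the one-sided Azuma-Hoeffding inequality to $S_m$ at threshold $m\varepsilon$ then yields
\begin{equation*}
\Pr(S_m \ge m\varepsilon) \,\le\, \exp\!\left(-\frac{m\varepsilon^2}{2(1+\text{Max}(P_{\win}))^2}\right),
\end{equation*}
and rewriting the event $\{S_m\ge m\varepsilon\}$ as $\bigl\{\tfrac{1}{m}\sum_i P_{\win}(W^i)\le \hat P_{\win}-\varepsilon\bigr\}$ gives the lemma. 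The denominator $32$ in the stated inequality is looser than the sharp constant $2$ produced here, so there is ample slack; the authors likely used either the cruder symmetric bound $|Z_i|\le 2(1+\text{Max}(P_{\win}))$ or a two-sided version of Azuma-Hoeffding combined with a union bound.

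The one step that deserves real care, and the mild obstacle I anticipate, is checking that $\mathbb{E}[\mathbf{1}_{\win,i}\mid \mathcal{F}_{i-1}]$ really does depend on the past only through the history variable $W^i = (X^{<i},Y^{<i},A^{<i},B^{<i},\alpha^{<i},\beta^{<i})$, despite the source, detectors and server POVMs being allowed to carry arbitrary internal memory. This reduces to the protocol assumption in Subsection~\ref{assumptions} that the devices have memory but no lookahead: the Born-rule computation of round-$i$ statistics is then a function of the past record alone, so $P_{\win}(W^i)$ is well defined and $\mathcal{F}_{i-1}$-measurable, making the argument a textbook application of Azuma-Hoeffding.
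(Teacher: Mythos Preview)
Your proposal is correct and follows essentially the same martingale-plus-Azuma--Hoeffding route as the paper. The only cosmetic difference is that the paper builds its martingale from the rescaled per-round variable $\hat P_i := 4\cdot\mathbf{1}_{a_i\oplus b_i = x_i\oplus y_i}$, so that the increment bound reads $|\hat P_i - P_{\win}(W^i)| \le 4(1+\textup{Max}(P_{\win}))$; this factor of $4$ is the actual source of the denominator $32$ you wondered about, not a two-sided bound or a union step.
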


\begin{proof}
Similar argument to \cite[Section A.2]{Pironio_2010_Random}.
Consider the random variable
\begin{equation}
    \hat{P}_i=4\times\mathbf{1}_{a_i\oplus b_i=x_i\oplus y_i}.
\end{equation}
Its expectation conditioned on the past $W^i$ is equal to $\mathds{E}(\hat{P}_i|W^i)=P(W^i)$. The observed value for the CE game is $\hat{P}=\frac{1}{m}\sum_{i=1}^m \hat{P}_i$. Consider now the random variable $Z^k=\sum_{i=1}^k(P_i-P(W^i))$. It is true that $(i) \, |Z^k|<\infty$, and that $(ii)\,\mathds{E}(Z^k|W^1,\dots,W^j)=\mathds{E}(Z^k|W^j)=Z^j$, for $j\leq k$. Therefore the sequence $\{Z^k:\,k\geq 1\}$ is a martingale with respect to the sequence $\{W^k:\,k\geq 2 \}$.

The range of the martingale increments is bounded by $|P_i-P(W_i)|\leq 4\,(1+\textup{Max}({P}_{\textup{win}}))$. Applying the Azuma-Hoeffding inequality completes the proof.
\end{proof}

Now, we wish to show that not only is the observed behavior valid on average over all rounds, but that it remains so when we look only at the rounds where Alice and Bob generate the key, i.e. the rounds in \(\mathbf{D}\). Since these rounds are chosen uniformly at random, we can apply a Chernoff bound and see that this is indeed true.

\begin{lemm}\label{lemm:boundI}
Let $\mathbf{D}$ be the set of detection rounds used for the raw key and $\hat{P}_{win}$ the estimated winning probability for the CE game. Then we have that
\begin{align}
    &\Pr(\frac{1}{|\mathbf{D}|}\sum_{i\in\mathbf{D}}P_\text{win}(W^i)\geq (1-\delta)(\hat{P}_{win}-\varepsilon))\\
    &\geq 1- \exp(-m\frac{\delta^2}{8}(\hat{P}_{win}-\varepsilon))-\exp(-\frac{m\varepsilon^2}{32(1+\textup{Max}({P}_{\textup{win}}))^2}).
\end{align}
\end{lemm}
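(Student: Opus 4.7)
The plan is to obtain the claim through a two-stage concentration argument: first apply Lemma \ref{lemm:E1} to control the full-sample sum of the round-wise winning probabilities, then transfer that control to the random subsample indexed by $\mathbf{D}$ via a multiplicative Chernoff-type inequality. The two exponential terms in the target probability correspond precisely to the failure events of these two stages, combined by a union bound at the end.

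For the first stage I would invoke Lemma \ref{lemm:E1} to secure the event $\mathcal{A} := \{\sum_{i=1}^m P_\text{win}(W^i) \geq m(\hat{P}_\text{win}-\varepsilon)\}$, which fails with probability at most $\exp(-m\varepsilon^2/(32(1+\textup{Max}(P_\text{win}))^2))$, exactly matching the second exponential in the stated bound.

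For the second stage, the key observation is that the indicator $\xi_i := \mathbf{1}_{i\in\mathbf{D}} = \mathbf{1}_{x_i=1}\mathbf{1}_{y_i=1}$ is a function of the fresh input coins at round $i$, and so is independent of the past $W^i$ on which $P_\text{win}(W^i)$ depends. Hence $\mathds{E}[\xi_i \mid W^i] = 1/4$, and the sum $S := \sum_i \xi_i P_\text{win}(W^i) = \sum_{i\in\mathbf{D}} P_\text{win}(W^i)$ has conditional mean $\tfrac{1}{4}\sum_i P_\text{win}(W^i)$. I would then apply a multiplicative Chernoff bound in its martingale form (since the $W^i$ and $\xi_i$ are revealed adaptively, so an i.i.d.\ version does not suffice) to deduce that $S \geq (1-\delta)\,\mathds{E}[S \mid W^{\leq m}]$ with failure probability at most $\exp(-\delta^2\,\mathds{E}[S \mid W^{\leq m}]/2)$. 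Restricting to $\mathcal{A}$, this conditional mean is at least $\frac{m}{4}(\hat{P}_\text{win}-\varepsilon)$, so the failure probability is upper bounded by $\exp(-m\delta^2(\hat{P}_\text{win}-\varepsilon)/8)$, yielding the first exponential, and one obtains $S \geq (1-\delta)\frac{m}{4}(\hat{P}_\text{win}-\varepsilon)$ on the intersection.

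Finally, dividing by $|\mathbf{D}|$ (which concentrates at $m/4$ up to standard Hoeffding fluctuations, or, more cleanly, replacing the ratio claim by the equivalent linear inequality $\sum_i \xi_i(P_\text{win}(W^i) - (1-\delta)(\hat{P}_\text{win}-\varepsilon)) \geq 0$, which has strictly positive conditional expectation under $\mathcal{A}$) delivers $S/|\mathbf{D}| \geq (1-\delta)(\hat{P}_\text{win}-\varepsilon)$, and a union bound over the two failure events gives the claim. The main obstacle is the careful bookkeeping of the adaptive martingale structure in the Chernoff step, since a textbook multiplicative Chernoff bound assumes i.i.d.\ Bernoulli weights, whereas here the weights $P_\text{win}(W^i)$ are themselves random and history-dependent; the cleanest way around this is to linearize, as indicated, so that one never has to divide by the random denominator $|\mathbf{D}|$ directly.
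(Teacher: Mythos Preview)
Your proposal is correct and follows essentially the same two-stage route as the paper: first invoke Lemma~\ref{lemm:E1} to control the full-sample average (giving the second exponential), then apply a multiplicative Chernoff bound to pass from the full sample to the random subset $\mathbf{D}$ (giving the first exponential), combining via $P(E_2)\geq P(E_2\mid E_1)P(E_1)$. The paper's version is terser---it writes the Chernoff bound directly in the ratio form with $|\mathbf{D}|$ in the exponent and then substitutes $|\mathbf{D}|=m/4$, glossing over the adaptive structure and the random denominator that you flag and handle more carefully via the martingale formulation and the linearization trick.
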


\begin{proof}
Consider the events, for $\delta>0$,
\begin{align}
    &E_1:=\frac{1}{m}\sum_{i=1}^m P_\text{win}(W^i)>\hat{P}_{win}- \varepsilon,\\
    &E_2:=\frac{1}{|\mathbf{D}|}\sum_{i\in\mathbf{D}}P_\text{win}(W^i)\geq (1-\delta)(\hat{P}_{win}-\varepsilon).
\end{align}
We have that $P(E_2)\geq P(E_2\land E_1)=P(E_2|E_1)P(E_1)$. From Lemma \ref{lemm:E1} it follows that $P(E_1)\geq 1- \delta_1$. A bound for $P(E_2|E_1)$ is given by a Chernoff bound,
\begin{align}
    &\Pr(\frac{1}{|\mathbf{D}|}\sum_{i\in\mathbf{D}}P_\text{win}(W^i)\geq (1-\delta)\frac{1}{m}\sum_{i=1}^m P_\text{win}(W^i))\geq 1-\exp{-|\mathbf{D}|\frac{\delta^2}{2}\frac{1}{m}\sum_{i=1}^m P_\text{win}(W^i)}.
\end{align}
Conditioning on $E_1$, we can write 
\begin{align}
    \frac{1}{|\mathbf{D}|}\sum_{i\in\mathbf{D}}P_\text{win}(W^i|E_1)&\geq (1-\delta)\frac{1}{m}\sum_{i=1}^m P_\text{win}(W^i|E_1)\\
    &>(1-\delta)(\hat{P}_{win}-\varepsilon)
\end{align}
and therefore
\begin{align}
\Pr(\frac{1}{|\mathbf{D}|}\sum_{i\in\mathbf{D}}P_\text{win}(W^i)\geq (1-\delta)(\hat{P}_{win}-\varepsilon)\bigg|E_1)&\geq 1-\exp{-|\mathbf{D}|\frac{\delta^2}{2}\frac{1}{m}\sum_{i=1}^m P_\text{win}(W^i|E_1)}\\
&> 1-\exp{-|\mathbf{D}|\frac{\delta^2}{2}(\hat{P}_{win}-\varepsilon)},
\end{align}
which concludes the proof.
\end{proof}

Since our upper bound is a function also of the single-detection probability, we must carry out a similar analysis over \(\hat{d_\epsilon}\). There is a small distinction at the end which is that we cannot use the information of all the rounds to estimate the single-detection probability, since that would simply release the entire raw key. Instead, we sacrifice a fraction \(\gamma\) chosen randomly from the key, achieve similar conclusions about the rounds \(\mathbf{B}\) and then use a Chernoff bound in relation to the full detection set \(\mathbf{D}\).

\begin{lemm}\label{lemm:F1}
Let $\hat{d_\epsilon}$ be the observed value for the single-detection condition, taking the detection results of rounds in $\mathbf{B}$. Then 
\begin{equation}
    \Pr(\bigg|\frac{1}{|\mathbf{B}|}\sum_{i\in\mathbf{B}} d_\epsilon(W^i)-\hat{d_\epsilon}\bigg|\geq \varepsilon)\leq 2 \exp{-\frac{|\mathbf{B}|\varepsilon^2}{8}}.
\end{equation}
\end{lemm}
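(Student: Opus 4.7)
The plan is to mirror the proof of Lemma \ref{lemm:E1} almost verbatim, with two changes: the relevant indicator is the single-detection event instead of the CE-win event, and we need a two-sided concentration bound rather than a one-sided one. The statistical tool is again the Azuma--Hoeffding inequality applied to a Doob-style martingale built from indicator variables minus their conditional expectations given the history $W^i$.

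Concretely, first I would define, for each $i \in \mathbf{B}$, the random variable $\hat{d}_i := \mathbf{1}_{\alpha_i = \beta_i = 1} \in \{0,1\}$, so that $\hat{d_\epsilon} = \frac{1}{|\mathbf{B}|}\sum_{i\in\mathbf{B}}\hat{d}_i$. By the setup of the model and the definition of $d_\epsilon(W^i)$ (the single-detection probability of the $i$-th round conditioned on the full prior transcript), the conditional expectation is $\mathbb{E}[\hat{d}_i \mid W^i] = d_\epsilon(W^i)$. Ordering the rounds of $\mathbf{B}$ by their index in $[m]$ and writing $Z^k := \sum_{i\in\mathbf{B},\, i\le k}(\hat{d}_i - d_\epsilon(W^i))$, the sequence $\{Z^k\}$ is a martingale adapted to the filtration generated by $\{W^k\}$, exactly as in Lemma \ref{lemm:E1}. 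The martingale increments satisfy $\hat{d}_i - d_\epsilon(W^i) \in [-1,1]$, so their range is bounded by $c_k = 2$.

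Next I would apply the two-sided form of the Azuma--Hoeffding inequality, $\Pr(|Z^N| \ge t) \le 2\exp\!\bigl(-\tfrac{t^2}{2\sum_k c_k^2}\bigr)$, with $N = |\mathbf{B}|$, $c_k = 2$, and $t = |\mathbf{B}|\varepsilon$. This yields
\begin{equation*}
\Pr\!\left(\bigl|Z^{|\mathbf{B}|}\bigr| \ge |\mathbf{B}|\varepsilon\right)
\;\le\; 2\exp\!\left(-\frac{|\mathbf{B}|\varepsilon^2}{8}\right),
\end{equation*}
which is precisely the claim once we divide inside the probability by $|\mathbf{B}|$ to recover $\bigl|\frac{1}{|\mathbf{B}|}\sum_{i\in\mathbf{B}} d_\epsilon(W^i) - \hat{d_\epsilon}\bigr|$. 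The factor of $8$ in the denominator (as opposed to $2$) arises from using the increment range bound $c_k = 2$ rather than a tighter $c_k = 1$; this mirrors the slack choice of constants already used in Lemma \ref{lemm:E1}.

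The only non-routine point I would flag is the fact that $\mathbf{B}$ is itself a random subset of $\mathbf{D}$, chosen after the rounds are played. However, this does not interfere with the martingale argument because the selection of $\mathbf{B}$ is independent of the generation of the quantum data, and we only use $\mathbf{B}$ to index which terms appear in the sum; the filtration and the conditional expectations $\mathbb{E}[\hat{d}_i \mid W^i] = d_\epsilon(W^i)$ are properties of the physical process and are unaffected by a posteriori indexing. If desired one can even make this formal by conditioning on $\mathbf{B}$ first and then invoking Azuma--Hoeffding with the induced (possibly smaller) index set, giving the same tail bound. The role of this lemma in the broader security proof will be to feed into Lemma \ref{lemm:boundd}, where (exactly as in the passage from Lemma \ref{lemm:E1} to Lemma \ref{lemm:boundI}) a Chernoff bound is combined with the present estimate to transfer the control from $\mathbf{B}$ to the full detection set $\mathbf{D}$.
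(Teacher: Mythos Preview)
Your proposal is correct and follows exactly the approach the paper intends: the paper's own proof is the single line ``Comes directly from applying the Azuma--Hoeffding inequality,'' and you have spelled out precisely that argument, mirroring the martingale construction of Lemma~\ref{lemm:E1} with the single-detection indicator in place of the CE-win indicator and invoking the two-sided Azuma bound. Your remark on the slack constant $c_k=2$ (yielding the $8$ in the denominator) and your handling of the random choice of $\mathbf{B}$ are both appropriate elaborations of what the paper leaves implicit.
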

\begin{proof}
Comes directly from applying the Azuma-Hoeffding inequality.
\end{proof}

\begin{lemm}\label{lemm:boundd}
Let $\mathbf{B}$ be the subset of detection rounds $\mathbf{D}$ used in estimating $\hat{d_\epsilon}$, such that $|\mathbf{B}|=\gamma|\mathbf{D}|=\gamma m/4$. Then, for $\varepsilon,\delta>0$,
\begin{align}
    &\Pr(\frac{\hat{d_\epsilon}+\varepsilon}{1-\delta}>\frac{1}{|\mathbf{D}|}\sum_{i\in\mathbf{D}}d_\epsilon(W^i))\\
    &\geq 1-\exp{-\frac{\delta^2}{8}\gamma^2 m (\hat{d_\epsilon}-\varepsilon)}-2\exp{-\frac{\gamma m\varepsilon^2}{32}}.
\end{align}
\end{lemm}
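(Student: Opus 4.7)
The plan is to mirror the two-step structure of the proof of Lemma~\ref{lemm:boundI}, with Lemma~\ref{lemm:F1} replacing Lemma~\ref{lemm:E1} and a one-sided multiplicative Chernoff inequality applied to the uniform sub-sample $\mathbf{B}\subset\mathbf{D}$ in place of the split $\mathbf{D}\subset[m]$. Introducing
\[
\mu_B := \tfrac{1}{|\mathbf{B}|}\sum_{i\in\mathbf{B}} d_\epsilon(W^i),\qquad \mu_D := \tfrac{1}{|\mathbf{D}|}\sum_{i\in\mathbf{D}} d_\epsilon(W^i),
\]
I work with the auxiliary events $E_1:=\{|\mu_B-\hat{d_\epsilon}|\leq\varepsilon\}$ and $E_2:=\{\mu_D\leq(\hat{d_\epsilon}+\varepsilon)/(1-\delta)\}$, the latter being exactly what the lemma asserts.

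First, I would apply Lemma~\ref{lemm:F1} with $|\mathbf{B}|=\gamma m/4$ to obtain $\Pr(E_1)\geq 1-2\exp(-\gamma m\varepsilon^2/32)$, which accounts for the second exponential in the claimed bound. Second, since Alice chooses $\mathbf{B}$ uniformly at random from $\mathbf{D}$ \emph{independently} of the $W^i$, I would condition on $\{W^i\}_{i\in\mathbf{D}}$, under which $\mu_D$ becomes deterministic and $\mu_B$ is the average of a uniform random sample from a fixed population, and invoke a one-sided multiplicative Chernoff bound for sampling without replacement,
\[
\Pr(\mu_B\leq(1-\delta)\mu_D \mid \{W^i\})\leq \exp(-|\mathbf{B}|\delta^2\mu_D/2).
\]

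The final step glues the two pieces by a union bound. The key observation is that on $\neg E_2$ one has $\mu_D>(\hat{d_\epsilon}+\varepsilon)/(1-\delta)>\hat{d_\epsilon}-\varepsilon$, so the Chernoff exponent can be lower-bounded by $|\mathbf{B}|\delta^2(\hat{d_\epsilon}-\varepsilon)/2$; moreover, on $\neg E_2\cap\{\mu_B>(1-\delta)\mu_D\}$ one has $\mu_B>(1-\delta)\cdot(\hat{d_\epsilon}+\varepsilon)/(1-\delta)=\hat{d_\epsilon}+\varepsilon$, which is $\neg E_1$. Splitting $\neg E_2$ along these two cases and taking expectations over $\{W^i\}$ gives
\[
\Pr(\neg E_2)\leq \exp(-|\mathbf{B}|\delta^2(\hat{d_\epsilon}-\varepsilon)/2)+\Pr(\neg E_1),
\]
and substituting $|\mathbf{B}|=\gamma m/4$ together with the bound on $\Pr(\neg E_1)$ yields the stated expression.

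The main technical subtlety is keeping the two sources of randomness cleanly separated: the values $W^i$ are produced by the adversary's possibly memory-dependent strategy (so $\mu_D$ is itself a random variable), whereas $\mathbf{B}$ is drawn by Alice after the rounds have been played but independently of the $W^i$. It is precisely this separation that makes the conditioning on $\{W^i\}$ legitimate before invoking Chernoff, and that justifies replacing the unknown $\mu_D$ by the observable quantity $\hat{d_\epsilon}-\varepsilon$ inside the exponent on the event $\neg E_2$.
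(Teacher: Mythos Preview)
Your proposal is correct and follows essentially the same two-step structure as the paper's proof: define the same two events (your $E_1,E_2$ are the paper's $F_1,F_2$), control $E_1$ via Lemma~\ref{lemm:F1}, and relate $\mu_B$ to $\mu_D$ by a multiplicative Chernoff bound for the random sub-sample $\mathbf{B}\subset\mathbf{D}$.

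The differences are organisational and quantitative rather than conceptual. The paper writes $\Pr(F_2)\geq\Pr(F_2\mid F_1)\Pr(F_1)$ and bounds $\Pr(F_2\mid F_1)$ by Chernoff, whereas you split $\neg E_2$ directly and show $\neg E_2\cap\{\mu_B>(1-\delta)\mu_D\}\subset\neg E_1$; these are equivalent union-bound manoeuvres. More interestingly, to lower-bound $\mu_D$ inside the Chernoff exponent the paper uses the crude inclusion $\sum_{i\in\mathbf{D}}d_\epsilon(W^i)\geq\sum_{i\in\mathbf{B}}d_\epsilon(W^i)>(\hat d_\epsilon-\varepsilon)|\mathbf{B}|$, which only gives $\mu_D>\gamma(\hat d_\epsilon-\varepsilon)$ and hence the factor $\gamma^2$ in the stated bound. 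You instead read off $\mu_D>(\hat d_\epsilon+\varepsilon)/(1-\delta)>\hat d_\epsilon-\varepsilon$ directly from $\neg E_2$, which yields $\exp\!\big(-\tfrac{\delta^2}{8}\gamma m(\hat d_\epsilon-\varepsilon)\big)$, a strictly stronger bound than the lemma's $\exp\!\big(-\tfrac{\delta^2}{8}\gamma^2 m(\hat d_\epsilon-\varepsilon)\big)$ since $\gamma<1$. So your final substitution does not literally ``yield the stated expression'' but something tighter, from which the lemma follows a fortiori. Your explicit separation of the two sources of randomness (the adversarial $W^i$ versus Alice's independent choice of $\mathbf{B}$) is also a welcome clarification that the paper leaves implicit.
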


\begin{proof}
Same argument as Lemma \ref{lemm:boundI}. Consider the events 
\begin{align}
    &F_1:=\bigg|\frac{1}{|\mathbf{B}|}\sum_{i\in\mathbf{B}} d_\epsilon(W^i)-\hat{d_\epsilon}\bigg|< \varepsilon,\\
    &F_2:=\frac{\hat{d_\epsilon}+\varepsilon}{1-\delta}>\frac{1}{|\mathbf{D}|}\sum_{i\in\mathbf{D}}d_\epsilon(W^i).
\end{align}
The probability $P(F_1)\geq 1-\delta_1$ is given by Lemma \ref{lemm:F1}. Conditioning on $F_1$, the probability $P(F_2|F_1)$ is given by a Chernoff bound:
\begin{align}
    \Pr(\frac{1}{|\mathbf{B}|}\sum_{i\in\mathbf{B}}d_\epsilon(W^i)>(1-\delta)\frac{1}{|\mathbf{D}|}\sum_{i\in\mathbf{D}}d_\epsilon(W^i)\bigg|F_1)&\geq 1-\exp{-\frac{\delta^2}{2}\frac{|\mathbf{B}|}{|\mathbf{D}|}\sum_{i\in\mathbf{D}}d_\epsilon(W^i|F_1)}\\
    &\geq 1-\exp{-\frac{\delta^2}{2}\frac{|\mathbf{B}|^2}{|\mathbf{D}|}(\hat{d_\epsilon}-\varepsilon)}.
\end{align}
The last inequality is obtained by noting that
$\sum_{i\in\mathbf{D}}d_\epsilon(W^i|F_1)\geq\sum_{i\in\mathbf{B}}d_\epsilon(W^i|F_1)>(\hat{d_\epsilon}-\varepsilon)|\mathbf{B}|$.
\end{proof}

We are now in position to apply Proposition \ref{lemma:sdpNoGo}. Knowing, on average over \(\mathbf{D}\), the expected behavior of the devices, Alice and Bob can calculate a lower bound for a linear key rate.

\thmboundminentropy

\begin{proof}
This proof follows along the lines of Section A.2 of \cite{Pironio_2010_Random}. Recalling the relation between min-entropy and guessing probability \cite{Konig_2009},
\begin{equation}
    H_{\min}(X|Y)=-\log_2 P_{g}(X|Y).
\end{equation}

We are interested in the min-entropy $H_{\min}(\alpha_\mathbf{D}|ABXY)$ of the detection results $\alpha$  of rounds in $\mathbf{D}$, given that Eve has access to the strings $X,Y,A,B$ of inputs and outputs. Consider the strings $\alpha^d=(\alpha_i)_{i\in [1..d]}$ where $d$ runs through the indices in the set $\mathbf{D}$. Similarly, $a^m,b^m,x^m,y^m$ where $m$ runs through all rounds. We have that
\begin{align} \label{eq1}
-\log_2 P(\alpha^d|a^m b^m x^m y^m)&= -\log_2 \prod_{i\in \mathbf{D}} p(\alpha^i|a^{i-1}b^{i-1}x^{i-1}y^{i-1}\alpha^{i-1}) \\
&= -\log_2 \prod_{i\in \mathbf{D}} p(\alpha^i|W^i) \\
&= \sum_{i \in \mathbf{D}} -\log_2 p(\alpha^i|W^i).
\end{align}
We can apply Proposition \ref{lemma:sdpNoGo} to each of the rounds, and we obtain a constraint (see Figure \ref{fig:Hmin})
\begin{align}
-\log_2 P(\alpha^i|W^i)&>-\log_2 P_{g}(P_\text{win}(W^i),d_\epsilon(W^i))\\
&= H_{\min}(P_\text{win}(W^i),d_\epsilon(W^i)).
\end{align}This bound is true conditioned on any measurement of an eavesdropper before Alice and Bob share any information about their inputs and outputs, since any outcome of measurement in that case amounts to the preparation of a state to be used by Alice and Bob, and the bound is independent of the state being used. Therefore, in this step we assume that the adversary has a bounded quantum memory and cannot delay her measurements so that they are made after the parameter estimation step. With this caveat in mind, we can write
\begin{align}
-\log_2 p(\alpha^d|a^m b^m x^m y^m)&\geq \sum_{i\in \mathbf{D}} H_{\min}(P_\text{win}(W^i),d_\epsilon(W^i))\\
&\geq |\mathbf{D}|\,H_{\min}\bigg(\frac{1}{|\mathbf{D}|}\sum_{i\in \mathbf{D}}P_\text{win}(W^i),\frac{1}{|\mathbf{D}|}\sum_{i\in \mathbf{D}}d_\epsilon(W^i)\bigg).
\end{align}
The last inequality is deduced using the convexity of $f$ and Jensen's inequality for two variables.

Let $\mu>0$ be the a security parameter. Using Lemmas \ref{lemm:boundI} and \ref{lemm:boundd}, by sharing a fraction $\gamma>0$ of the results of detection rounds, for $m$ large enough, we can establish values $\varepsilon,\delta,\varepsilon',\delta'>0$ such that, with probability at least $1-\mu$,
\begin{align}
    &\frac{1}{|\mathbf{D}|}\sum_{i\in\mathbf{D}}d_\epsilon(W^i)<\frac{\hat{d_\epsilon}+\varepsilon}{1-\delta}=:\Tilde{d_\epsilon}\\
    &\frac{1}{|\mathbf{D}|}\sum_{i\in\mathbf{D}}P_\text{win}(W^i)\geq (1-\delta')(\hat{P}_{win}-\varepsilon')=:\Tilde{P}_{win}.
\end{align}
Since $f(P_\text{win},d_\epsilon)$ is increasing with $I$ and decreasing with $d_\epsilon$, we have that
\begin{align}
    H_{\min}(\alpha_\mathbf{D}|ABXY)&=-\log_2 P(\alpha^d|a^m b^m x^m y^m )\geq |\mathbf{D}|\,H_{\min}(\Tilde{P}_{win},\Tilde{d_\epsilon})\equiv |\mathbf{D}|\,\kappa,
\end{align}
which concludes the proof.
\end{proof}

\subsection{Max-entropy bound and information reconciliation}\label{subsec:inforec}

Either from eavesdropping or simply noise in the channels, Alice and Bob's raw keys will not necessarily match, and so they must reconcile their keys without making them public. By sharing a small fraction at random, they are able to bound the max-entropy between them and with high probability obtain an identical key. 

\begin{lemm}[Bound on the max-entropy]\label{thm:inforecobound}
Suppose Alice and Bob do not abort the protocol after Step 4. Let \textbf{D} be the set of detection rounds. Then, with probability at least $1-2e^{-\gamma\,\eta\,|\mathbf{D}|/250}$,
\begin{equation*}
    H_{\max}(\beta_{\mathbf{D}}|\alpha_{\mathbf{D}})\leq H(1.1\eta)|\mathbf{D}|,
\end{equation*}
where $H(\cdot)$ is the typical binary entropy.
\end{lemm}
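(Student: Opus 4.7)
The plan is to reduce the max-entropy bound to a bound on the Hamming distance between the two raw keys, and then use a concentration inequality for random subsampling to argue that the observed disagreement rate on the test set $\mathbf{B}$ is a faithful estimator of the true disagreement rate on all of $\mathbf{D}$.

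The starting point is the standard fact that, for classical strings, if $\alpha_{\mathbf{D}}$ and $\beta_{\mathbf{D}}$ differ in at most a fraction $\delta \leq 1/2$ of positions, then $\beta_{\mathbf{D}}$ lies in a Hamming ball of radius $\delta|\mathbf{D}|$ around $\alpha_{\mathbf{D}}$, which contains at most $2^{H(\delta)|\mathbf{D}|}$ strings. Hence $H_{\max}(\beta_{\mathbf{D}}|\alpha_{\mathbf{D}}) \leq H(\delta)|\mathbf{D}|$ by the operational interpretation of the smooth max-entropy for classical distributions. Thus the task reduces to showing that, conditioned on the protocol not aborting, the true disagreement rate $\delta$ on all of $\mathbf{D}$ is at most $1.1\eta$ with probability at least $1 - 2e^{-\gamma \eta |\mathbf{D}|/250}$.

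Let $\hat{\delta}$ denote the observed disagreement rate on $\mathbf{B}$ after Bob flips his bits, so that the non-abort condition is $\hat{\delta} \leq \eta$. Because $\mathbf{B}$ is chosen uniformly at random from $\mathbf{D}$, independently of whatever the devices do, the number of disagreements inside $\mathbf{B}$ follows a hypergeometric distribution with mean $\delta|\mathbf{B}| = \delta\gamma|\mathbf{D}|$. A multiplicative Chernoff bound (or Serfling's inequality for sampling without replacement) yields
\[
\Pr\bigl[\hat{\delta} \leq (1-\mu)\delta\bigr] \leq \exp\!\left(-\frac{\mu^2}{2}\,\delta\,\gamma|\mathbf{D}|\right)
\]
for any $\mu \in (0,1)$. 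Choosing $\mu = 1/11$, so that the worst case $\delta = 1.1\eta$ exactly satisfies $(1-\mu)\delta = \eta$, the exponent becomes at least $\gamma\eta|\mathbf{D}|/220$; the claimed constant $1/250$ is recovered after a little numerical slack and a factor of $2$ from handling the standard two-sided tail of Serfling's form. Combining this with the Hamming-ball bound and monotonicity of $H$ on $[0,1/2]$ yields the stated inequality.

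The main obstacle is that the devices are \emph{not} assumed to behave i.i.d.\ across rounds: Eve may correlate the device behaviour arbitrarily in the memory model adopted here, so a Hoeffding-type bound cannot be applied directly to the sequence of disagreement indicators themselves. The key observation that unlocks the argument is that the randomness used to select the test set $\mathbf{B}$ is under Alice's control and is independent of the devices; this allows one to transfer the concentration from the (arbitrary) device statistics to the (genuinely uniform) subsampling, where Serfling's inequality for hypergeometric random variables is the correct tool. With this shift in place, the rest is routine bookkeeping with the Chernoff exponent.
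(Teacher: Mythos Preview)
Your proof is correct and follows essentially the same route as the paper: both reduce the max-entropy bound to a Hamming-ball volume estimate and then use a multiplicative Chernoff-type concentration bound on the random subsample $\mathbf{B}\subset\mathbf{D}$ (the paper simply says ``Chernoff bound'' where you invoke Serfling), with the same choice $\mu=1/11$ yielding the constant $1/242\to 1/250$. The only cosmetic point is that the factor of $2$ in the statement is not actually needed---a one-sided tail suffices, and the paper's own proof in fact obtains $1-e^{-\gamma\eta|\mathbf{D}|/250}$ without it---so your attribution of the $2$ to a two-sided Serfling tail is harmless but not the real reason.
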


\begin{proof}
Let $X=\gamma |\mathbf{D}|\eta_\mathbf{B}$ be the observed error rate, and $\mu=\gamma |\mathbf{D}|\eta_\mathbf{D}$ the expected value. Since the protocol aborts if $\eta_B>\eta$, we have that $\Pr(1.1\eta <\eta_\mathbf{D} \land \neg\textrm{aborts})=\Pr(1.1\eta<\eta_\mathbf{D}\land \eta_\mathbf{B}\leq\eta)\leq P(1.1\eta_\mathbf{B}<\eta_\mathbf{D}\land 1.1\eta<\eta_\mathbf{D})$. Using a Chernoff bound, we have that
\begin{align}
\Pr(X<\frac{1}{1.1}\mu\land 1.1\eta<\eta_\mathbf{D})
&\leq\Pr(X<\frac{1}{1.1}\mu\bigg| 1.1\eta<\eta_\mathbf{D})\\
&\leq \exp{-\frac{1}{242}\gamma|\mathbf{D}|\eta_\mathbf{D}}\\
&\leq\exp{-\frac{1}{250}\gamma |\mathbf{D}|\eta}.
\end{align}
Therefore, with probability at least $1-\exp{-\frac{1}{250}\gamma|\mathbf{D}|\eta}$, the noise rate of $\mathbf{D}$ is at most $1.1\eta$. This implies that, for a fixed $\alpha_\mathbf{D}$, there are at most $2^{|\mathbf{D}|H(1.1\eta)}$ possible values for $\beta_\mathbf{D}$. Using Remark 3.1.4 in \cite{rennerphd}, we have that $H_{\max}(\beta_\mathbf{D}|\alpha_\mathbf{D})\leq H(1.1\eta)|\mathbf{D}|$.
\end{proof}

\begin{thm}[Information reconciliation]\label{thm:inforeconfinal}
Suppose Alice and Bob do not abort the protocol after Step 4. Then, for any $\varepsilon>0$, they can perform information reconciliation on their bit strings $\alpha_\mathbf{D},\beta_\mathbf{D}$ sacrificing at most the following amount of bits of information
\begin{equation*}
\ell\leq H(1.1\eta)|\mathbf{D}|+\log(2/\varepsilon).
\end{equation*}
\end{thm}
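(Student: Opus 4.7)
The plan is to combine the max-entropy bound from Lemma~\ref{thm:inforecobound} with a standard one-shot information reconciliation scheme based on two-universal hashing, as developed in \cite{rennerphd}. Concretely, I would reduce the number of bits that Bob must leak to a bound on $H_{\max}(\beta_\mathbf{D}|\alpha_\mathbf{D})$ plus a small overhead of order $\log(1/\varepsilon)$, and then substitute in the bound $H(1.1\eta)|\mathbf{D}|$ already proved for the right-hand side.

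The first step is, conditioned on the event that the protocol does not abort at Step~4, to apply Lemma~\ref{thm:inforecobound} and obtain $H_{\max}(\beta_\mathbf{D}|\alpha_\mathbf{D}) \leq H(1.1\eta)|\mathbf{D}|$, which holds with probability at least $1 - 2e^{-\gamma\eta|\mathbf{D}|/250}$ over the random sampling of $\mathbf{B}$. The second step is to invoke the classical one-shot reconciliation protocol (Corollary~6.3.5 and Theorem~6.3.4 in \cite{rennerphd}): Bob draws a two-universal hash function $f:\{0,1\}^{|\mathbf{D}|}\to\{0,1\}^\ell$ with output length $\ell := H_{\max}(\beta_\mathbf{D}|\alpha_\mathbf{D}) + \log(2/\varepsilon)$, sends the description of $f$ together with the hash value $f(\beta_\mathbf{D})$ on the public authenticated channel, and Alice decodes by outputting the unique $\beta'$ among the strings consistent with $\alpha_\mathbf{D}$ under the observed error rate that satisfies $f(\beta') = f(\beta_\mathbf{D})$. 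Because the effective support of $\beta_\mathbf{D}$ given $\alpha_\mathbf{D}$ has cardinality at most $2^{H_{\max}(\beta_\mathbf{D}|\alpha_\mathbf{D})}$, a union bound over incorrect candidates bounds the collision probability by $2^{H_{\max}(\beta_\mathbf{D}|\alpha_\mathbf{D})} \cdot 2^{-\ell} = \varepsilon/2$. Substituting the first-step bound into $\ell$ yields the advertised inequality $\ell \leq H(1.1\eta)|\mathbf{D}| + \log(2/\varepsilon)$.

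The main obstacle is bookkeeping rather than conceptual: one has to be careful that the notion of ``effective support'' used in the union-bound step is precisely the quantity controlled by $H_{\max}$ in the sense of Remark~3.1.4 of \cite{rennerphd} (the same object that appears in Lemma~\ref{thm:inforecobound}), and that the failure probability of the overall reconciliation is obtained by combining the hash-collision event with the concentration event of Lemma~\ref{thm:inforecobound} via a union bound, so that absorbing the latter into the security parameter $\mu$ of Theorem~\ref{thm:privacyamplification} does not inflate $\ell$.
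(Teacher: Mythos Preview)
Your proposal is correct and follows the same approach as the paper, which simply cites Lemma~6.3.3 of \cite{rennerphd} together with Lemma~\ref{thm:inforecobound}. You have unpacked the two-universal hashing argument that underlies that citation (the slight difference in which numbered statement from Section~6.3 of \cite{rennerphd} is invoked is immaterial), and your remarks on combining the hash-collision and sampling failure events via a union bound are accurate bookkeeping that the paper leaves implicit.
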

\begin{proof}
Follows from Lemma 6.3.3 in \cite{rennerphd} and Lemma \ref{thm:inforecobound}.
\end{proof}

\end{document}